 \newtheorem{thm}{Theorem}[section]
 \theoremstyle{definition}
 \newtheorem{defn}[thm]{Definition}
 \theoremstyle{remark}
 \numberwithin{equation}{section}
 \def\idty{{\mathchoice {\mathrm{1\mskip-4mu l}} {\mathrm{1\mskip-4mu l}} %
{\mathrm{1\mskip-4.5mu l}} {\mathrm{1\mskip-5mu l}}}}
\newcommand{\bR}{{\mathbb R}}
\newcommand{\Ir}{{\mathbb Z}}
\newcommand{\cA}{{\mathcal A}}
\newcommand{\cB}{{\mathcal B}}
\newcommand{\supp}{\operatorname{supp}}
\newcommand{\cS}{{\mathcal S}}
\newcommand{\A}{\mathcal{A}} 
\newcommand{\dom}{\mathop{\rm dom}}
\newcommand{\be}{\begin{equation}}
\newcommand{\ee}{\end{equation}}
\newcommand{\bea}{\begin{eqnarray}}
\newcommand{\eea}{\end{eqnarray}}
\newcommand{\beann}{\begin{eqnarray*}}
\newcommand{\eeann}{\end{eqnarray*}}
\newcommand{\Rl}{\bR}
\newcommand{\eq}[1]{(\ref{#1})}
\title[Lieb-Robinson Bounds]{From Lieb-Robinson Bounds to Automorphic Equivalence}
\author[B. Nachtergaele]{Bruno Nachtergaele}
\thanks{Based upon work supported by the National Science Foundation under DMS--2108390. MSC: 82B10 (Primary), 82B20, 82C10. Keywords: Lieb-Robinson bounds, quantum phase transitions, gapped ground states.}
\address{Department of Mathematics and Center for Quantum Mathematics and Physics\\
University of California, Davis\\
Davis, CA 95616, USA}
\email{bxn@math.ucdavis.edu}
\begin{document}
\date{\today }
\begin{abstract}
I review the role of Lieb-Robinson bounds in characterizing and utilizing the locality properties of the Heisenberg dynamics of quantum lattice systems. In particular, I discuss two definitions of gapped ground state phases and show that they are essentially equivalent.
\end{abstract}

\maketitle

\begin{center}
{\em Dedicated to Elliott Lieb on the occasion of his 90th birthday.}
\end{center}

\section{Introduction}\label{sec:intro}

Lieb and Robinson proved their now famous and ubiquitous propagation estimate in a 1972 article in Communications in Mathematical Physics \cite{lieb:1972}. The result immediately attracted attention. It clearly established an important fundamental property of the dynamics of quantum spin systems when the interactions are sufficiently short-range. With this 7-page paper they planted the seed of a revolution. That seed lay dormant for over thirty years. It took that long before anyone understood that Lieb-Robinson bounds are the key to proving a number of important results. It was Matthew Hastings who realized that a propagation estimate of the Lieb-Robinson type should hold and that it could be used, for example, to prove exponential clustering in a gapped ground 
state. He applied Lieb-Robinson bounds without knowing, at first, that such an estimate already existed in the literature and without knowing its proof. The estimate of a finite correlation length in terms of the ground state gap played an essential role in Hastings' breakthrough paper on the multi-dimensional Lieb-Schultz-Mattis theorem \cite{hastings:2004}. An exponential clustering theorem had been proved for a while in the context of relativistic quantum field theories \cite{fredenhagen:1985}, where the speed of light provides an absolute bound on the speed of propagation of signals. Rigorous proofs of the non-relativistic exponential clustering theorem and the Lieb-Robinson bounds used in them appeared in \cite{nachtergaele:2006a,hastings:2006}.

Interest in deriving Lieb-Robinson bounds for new situations continues unabated. The increasing variety of physical and `artificial' quantum many-body systems available for the experimentalist is the primary motivation for this, but there are also theoretical considerations that prompt researchers to try to improve and extend existing bounds. For the mathematical physicist this continuing interest is both a blessing and a curse. On the one hand, new questions prompt new interesting research but, on the other hand, the `need for speed' in some communities leads to a sometimes premature rush to publication. Some potentially interesting results remain unverified to my knowledge. In part for this reason, I will not focus on the latest improvements and extensions that make Lieb-Robinson type bounds available for an ever wider variety of physical systems. Instead, I focus on the role Lieb-Robinson bounds play in the definition and classification of gapped ground state phases. That topic too continues to develop and we restrict ourselves here to reviewing the importance of locality properties and the fundamental role Lieb-Robinson bounds play in expressing the locality properties quantitatively.

\section{Lieb-Robinson bounds}

Lieb-Robinson bounds are a quantitative expression of the (quasi-)locality of the dynamics of a quantum many-body system. This locality property stems from the decay of the interaction strength as a function of distance. There are different ways of imposing  sufficient decay conditions on the interactions that allow one to prove a Lieb-Robinson bound. The form the bounds take depends, among other things, on the form of the assumptions. It is not my goal here to review the many options that have been discussed in the literature, but rather to emphasize how the concept of locality is playing a central role in current research on quantum many-body physics.

\subsection{Locality of observables}

For simplicity of the presentation, we will restrict ourselves to quantum spin systems.  It is useful to leave the `lattice' as general as possible. This means considering spins associated with the points in a discrete metric space $(\Gamma, d)$ that obeys a growth condition on the number of sites in balls: we will assume that there are constants $c, \nu>0$ such that the number of points in a ball of radius $r$ is bounded by $1 + c r^\nu$. Some authors assume that the set is Delone, meaning that $\Gamma$ is embedded in $\Rl^d$, for some $d$, that the bound on the growth of balls follows from a minimum distance between points and, in addition, that there is also a radius $r_0>0$, such that $|B_{x}(r)|\geq 2$ for all $r>r_0, x\in \Gamma$. 
Such an assumption is not needed for the main results I discuss here. As a concrete example of $(\Gamma,d)$, one can think of $\Ir^\nu$, 
$\nu\geq 1$, with the usual $\ell^1$ distance.

For each $x\in\Gamma$, the observable algebra is a finite-dimensional complex matrix algebra $\A_{\{x\}}$, and for finite 
$\Lambda\subset \Gamma$,
$$
\A_\Lambda = \bigotimes_{x\in \Lambda} \A_{\{x\}}.
$$
For $\Lambda_1\subset \Lambda_2$, $\A_{\Lambda_1}$ is naturally embedded into $\A_{\Lambda_2}$ and therefore we can define
$$
\cA_{\rm loc} = \bigcup_{\mbox{\small finite }\Lambda\subset\Gamma}
\cA_\Lambda, \quad \cA_\Gamma =  \overline{\cA_{\rm loc}}^{\Vert\cdot\Vert},
$$
where the completion is taken with respect to the standard operator norm. This turns $\cA_\Gamma$ into a $C^*$-algebra.
The algebras $\cA^{\rm loc}_\Gamma$ and $\cA_\Gamma$  are referred to as the algebras of {\em local} and {\em 
quasi-local} observables, respectively. $A\in\A_\Lambda$ is said to be {\em supported in} $\Lambda$, and the {\em support} of $A$, $\supp A$, is the smallest $\Lambda$ for which $A\in\A_\Lambda$.

By construction, for all $A\in \cA_\Gamma$ and any increasing sequence of finite $\Lambda_n \uparrow \Gamma$,
there exists a sequence of local observables $A_n \in \cA_{\Lambda_n}$, such that  $A_n\to A$. We will characterize 
the rate of convergence by positive non-increasing functions $f$ with $\lim_n f(n) =0$, which we will refer to as {\em decay functions}.

For a given sequence $(\Lambda_n)$, increasing to $\Gamma$, and a decay function $f$, we define
\be
\Vert A \Vert_f = \Vert A \Vert + \sup_n  f(n)^{-1} \inf \{ \Vert A - A_n\Vert \mid A_n \in \cA_{\Lambda_n}\},
\label{norm_f}\ee
and
$$
\cA^f_\Gamma = \{ A \in \cA_\Gamma \mid \Vert A\Vert_f < \infty\}.
$$
It is not hard to show that $\cA^f_\Gamma$ is a Banach $^*$-algebra with norm $\Vert \cdot\Vert_f$ \cite{moon:2020}.

A concrete sequence of local approximations of any $A\in\cA_\Gamma$ is obtained by using the conditional expectations 
$\Pi_\Lambda$ determined by the tracial state, $\rho$, on $\cA_\Gamma$,
$$
\Pi_\Lambda = {\rm id}_{\cA_\Lambda}\otimes \rho\restriction_{\cA_{\Gamma\setminus\Lambda}}.
$$
The local approximations given by $\Pi_{\Lambda_n} (A)$ are not necessarily optimal but the error is always 
bounded by twice the optimal one: for any $A_n\in\cA_{\Lambda_n}$ we have $\Pi_{\Lambda_n}(A_n)=A_n$ and,
hence
$$
\inf \{ \Vert A - B\Vert \mid B \in \cA_{\Lambda_n}\}\leq\Vert A - \Pi_{\Lambda_n}(A)\Vert \leq \Vert A - A_n \Vert + \Vert \Pi_{\Lambda_n}(A_n-A) \Vert
\leq 2 \Vert A - A_n\Vert,
$$
where, for the last step, we used $\Vert \Pi_{\Lambda_n}\Vert = 1$. By replacing the $\inf$ in \eq{norm_f} 
by $\Vert A - \Pi_{\Lambda_n}(A)\Vert$ one obtains a more explicit equivalent norm.

For lattice systems with an infinite-dimensional Hilbert space at each site, such as oscillator lattices, a normalized trace does not exist, 
yet the same type of estimates can be obtained \cite{nachtergaele:2013}. 

{\em Lieb-Robinson bounds} express locality of observables by a commutator estimate, which is again equivalent 
to the error of local approximations up to a factor of $2$:
$$
\Vert A - \Pi_\Lambda(A)\Vert\leq\sup_{B\in\cA_{\Gamma\setminus\Lambda},\Vert B\Vert =1}\Vert [A,B]\Vert\leq 2 \Vert A - \Pi_\Lambda(A)\Vert.
$$
This relation also shows that $\cA_\Gamma$ for any countably infinite $\Gamma$, is asymptotically abelian in the sense that
$$
\lim_n \sup_{B\in\cA_{\Gamma\setminus\Lambda_n},\Vert B\Vert =1} \Vert [A,B]\Vert=0,
$$
for all $A\in\cA_\Gamma$ and increasing sequences $\Lambda_n \uparrow \Gamma$.

A common choice for the finite volumes $\Lambda_n$ is balls of radius $n$ centered at some $x\in \Gamma$.  For many families of decay functions that have been used in the literature so far, the choice of $x$ is not important as the norms defined with different $x$ are all equivalent. For some purposes, it is better to consider a Fr\'echet space of observables with a family of (semi-)norms \cite{kapustin:2022}.

\subsection{Interactions with sufficient decay}

A quantum spin model is typically defined in terms of an interaction: a map $\Phi$ defined on the finite subsets of $\Gamma$ 
such that $\Phi(X)=\Phi(X)^*\in\A_X$, for all finite $X\subset\Gamma$. For each finite $\Lambda\subset \Gamma$,
a local Hamiltonian is defined as follows:
$$
H_\Lambda = \sum_{X\subset \Lambda} \Phi(X),
$$
which in turn defines the finite-volume Heisenberg dynamics:
$$
\tau_t^\Lambda(A)= U_\Lambda(t)^* A U_\Lambda(t), \mbox{ with } U_\Lambda = e^{-itH_\Lambda}.
$$
$\{\tau_t^\Lambda\}_{t\in\Rl}$ is a group of $^*$-automorphisms of $\cA_\Gamma$ that leaves $\cA_\Lambda$ invariant.

If the interaction depends on time, $t\mapsto \Phi(X,t) \in \A_X$, bounded and measurable, then
\begin{align}
\frac{d}{dt} U_\Lambda(t,s) &= -i H_\Lambda(t) U_\Lambda(t,s)\notag\\
U_\Lambda(s,s) &= \idty,\notag
\end{align}
defines co-cycles of unitaries $U_\Lambda(t,s) $ and automorphisms 
$$
\tau_{t,s}^\Lambda(A) = U_\Lambda(t,s)^* A U_\Lambda(t,s).
$$
In applications we often make the further assumption that the $t$-dependence is piecewise differentiable
and $t$ may take values in a finite or infinite interval $I\subset \Rl$.

If for all $x\in\Gamma$, $\sum_{X, x\in X} \Vert \Phi(X)\Vert < \infty$, we can define a derivation $\delta: \A_{\rm loc}\to\A_\Gamma$, by
$$
\delta(A) = \sum_{X, X\cap Y\neq\emptyset} [\Phi(X), A] , \quad A\in \A_Y, \mbox{ finite } Y\subset\Gamma.
$$
Under conditions of sufficient decay, $\delta$ can be exponentiated to define $\tau^\Phi_t = e^{it\delta}$, a dynamics on $\A_\Gamma$.
In practice, one finds conditions on $\Phi$ such that $\tau_t^\Lambda(A)$ converges for all $A\in\A_{\rm loc}$ as $\Lambda\uparrow \Gamma$,
uniformly in $t$ in compact intervals.

A convenient way to state such a decay condition uses the notion of $F$-function, defined as follows.

\begin{enumerate}
\item $F:[0,\infty)\to(0,\infty)$, non-decreasing;
\item $\Vert F\Vert_1 := \sup_{y\in\Gamma} \sum_{x\in\Gamma} F(d(x,y))< \infty$;
\item $\sum_{z\in\Gamma} F(d(x,z)) F(d(z,y)) \leq C_F F(d(x,y|))$ for all $x,y\in\Gamma$ and a suitable constant $C_F$.
\end{enumerate}

Then, for an interaction $\Phi$ as above, possibly time-dependent, we define
$$
\Vert \Phi\Vert_F (t) = \sup_{x,y\in\Gamma} \frac{1}{F(d(x,y))} \sum_{X, x,y\in X}\Vert \Phi(X,t)\Vert, t\in I.
$$
Given $F$ and $I$ we can then define the Banach space of interactions continuous on $I$:
$$
\cB_F(I) = \{ \Phi(\cdot,\cdot) \mid \Phi(X,\cdot) \mbox{ continuous  for all finite } X\subset \Gamma, \mbox{ and }\Vert \Phi(t)\Vert_F \mbox{ is locally bounded}\}.
$$

For $\Gamma=\Ir^\nu$, it is easy to check that 
$$
F(r) =\frac{1}{(1+r)^{\nu+\epsilon}}, \mbox{ for any }\epsilon >0
$$
is an $F$-function. Furthermore $F$ of the following form are $F$-functions:
$$
F(r) = e^{-a g(r)} F_0(r),
$$
where $F_0$ is an $F$-function and $g:[0,\infty)\to [0,\infty)$ is sub-additive.
Useful choices for $g$ are $g(r) = r^\theta, \theta\in (0,1]$,  and 
$g(r) = r/(\log (1+r)^2)$. The first characterizes interactions with exponential and stretched exponential decay and the second applies to the typical case of interactions defining the quasi-adiabatic evolution (see \eq{special_weight}). The spaces of interactions defined by $F$-functions 
with stretched exponential pre-factor will also be denoted by $\cB_{a,\theta}$.

If for some $F$-function $F$, $\Vert \Phi\Vert_F (t)$ is locally bounded, it follows that the limit $\tau^\Phi_t(A)$ defines a strongly continuous
one-parameter group of automorphisms of $\A_\Gamma$, $\tau^\Phi_t = e^{it\overline{\delta}}$. In this situation,  $\overline{\delta}$ is the 
closure of $\delta$, i.e., $\cA_{\rm loc}$ is a core for $\overline{\delta}$.

In general, there are many interactions $\Phi$ that generate the same dynamics, and this freedom is useful.
For example, if $(\Gamma,d)$ is a Delone subset of $\Rl^\nu$, one can assume that $\Phi$ is supported on balls
$b_x(n)\subset \Gamma$ and express decay by a condition of the form
$$
\Vert\Phi(b_x(n))\Vert \leq \Vert \Phi\Vert_f f(n), \mbox{ for all } x\in\Gamma.
$$
A procedure to do this is described in \cite[Section 2.5]{nachtergaele:2021}. The generator $\delta$ can then be given on $\cA_g$, for some $g$, as
$$
\delta(A) = \sum_x [\Phi_x,A], \quad \Phi_x = \sum_{n\geq 1} \Phi(b_x(n)).
$$
For suitable $f$ and $g$, there is $h$ for which $\delta(A)\in \cA_h$.
A concrete example of practical use is the following. If $f(n) = g(n) = e^{-a n^\theta}$, for $a>0$, and $\theta \in (0,1]$, one can take
$h(n) = e^{-a' n^\theta}$, with $a'< a$.

\subsection{Lieb-Robinson bounds}

For interactions $\Phi \in \cB_F(\Rl)$, we have the following theorem.

\begin{thm}[Lieb-Robinson Bound]\label{thm:LRB}
Let $\Phi\in\cB_F(I)$, and $X,Y$ finite subsets of $\Gamma$ with $X\cap Y = \emptyset$. Then, for all $\Lambda\subset\Gamma$, 
$A\in \cA_X, B\in\cA_Y$, and $t< s\in I$, we have
\be
\left\Vert [\tau^{\Phi,\Lambda}_{t,s}(A),B]\right\Vert \leq C_F^{-1} 2\Vert A\Vert \Vert B\Vert \left(e^{2\int_s^t \Vert\Phi(r)\Vert_F dr} -1\right) 
\sum_{x\in X, y\in Y} F(d(x,y)).
\label{LRB}\ee
\end{thm}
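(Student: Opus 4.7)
The plan is to derive and iterate a Gr\"onwall-type integral inequality for
$$C_B(X,t) := \sup\{\Vert[\tau^{\Phi,\Lambda}_{t,s}(A),B]\Vert : A\in\cA_X,\, \Vert A\Vert=1\}.$$
Writing $\alpha_t = \tau^{\Phi,\Lambda}_{t,s}$ and fixing $A\in\cA_X$, $B\in\cA_Y$ with $X\cap Y=\emptyset$, I set $C(t)=[\alpha_t(A),B]$, so that $C(s)=0$. Differentiating in $t$ and dropping the $Z$'s with $Z\cap X=\emptyset$ (which give $[\Phi(Z,t),A]=0$) yields
$$\frac{d}{dt}C(t) = i\sum_{Z\subset\Lambda,\,Z\cap X\neq\emptyset}[\alpha_t([\Phi(Z,t),A]),B].$$
The Jacobi identity rewrites each summand as $[\alpha_t(\Phi(Z,t)),C(t)] - [\alpha_t(A),[\alpha_t(\Phi(Z,t)),B]]$. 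Collecting the first pieces over all $Z$ defines a commutator action on $C(t)$ by a (time-dependent) self-adjoint operator, which can be implemented by a unitary propagator and therefore preserves the operator norm. Passing to this interaction picture and integrating in time yields
$$\Vert C(t)\Vert \leq 2\Vert A\Vert\int_s^t\sum_{Z\cap X\neq\emptyset}\Vert[\alpha_r(\Phi(Z,r)),B]\Vert\,dr,$$
and taking a supremum over unit-norm $A\in\cA_X$ produces the iterable inequality
$$C_B(X,t)\leq 2\int_s^t\sum_{Z:\,Z\cap X\neq\emptyset}\Vert\Phi(Z,r)\Vert\,C_B(Z,r)\,dr.$$

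Next I iterate this inequality. Substituting the crude bound $C_B(Z,s)\leq 2\Vert B\Vert\,\chi_{\{Z\cap Y\neq\emptyset\}}$ (reflecting that $[A',B]=0$ when the supports are disjoint) at the $k$-th step produces the series
$$C_B(X,t)\leq 2\Vert B\Vert\sum_{k\geq 1}2^k\int_s^t\! dr_1\int_s^{r_1}\! dr_2\cdots\int_s^{r_{k-1}}\! dr_k\sum_{\text{chains}}\prod_{i=1}^k\Vert\Phi(Z_i,r_i)\Vert,$$
where ``chains'' are tuples $(Z_1,\ldots,Z_k)$ of finite subsets of $\Lambda$ with $Z_1\cap X\neq\emptyset$, $Z_i\cap Z_{i-1}\neq\emptyset$ for $2\leq i\leq k$, and $Z_k\cap Y\neq\emptyset$.

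The concluding step evaluates these sums using the $F$-function. For each chain pick witnesses $x_0\in Z_1\cap X$, $x_i\in Z_i\cap Z_{i+1}$ for $1\leq i\leq k-1$, and $x_k\in Z_k\cap Y$. Property (2) gives $\sum_{Z\ni u,v}\Vert\Phi(Z,r)\Vert\leq\Vert\Phi(r)\Vert_F F(d(u,v))$, so, after over-counting the witness choices, the chain sum is dominated by
$$\prod_{i=1}^k\Vert\Phi(r_i)\Vert_F\sum_{x_0\in X,\,x_k\in Y}\;\sum_{x_1,\ldots,x_{k-1}\in\Gamma}\prod_{i=1}^k F(d(x_{i-1},x_i)).$$
The convolution property (3) applied $k-1$ times collapses the inner sums over intermediate sites to $C_F^{k-1}F(d(x_0,x_k))$, and the iterated time integral equals $(k!)^{-1}(\int_s^t\Vert\Phi(r)\Vert_F\,dr)^k$ by the standard symmetrization. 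Assembling the pieces and summing over $k\geq 1$ produces a Taylor series of the form $C_F^{-1}(e^{(\cdot)}-1)$ and yields \eq{LRB} (any residual $C_F$-factor inside the exponent can be absorbed into the normalization convention for $\Vert\Phi\Vert_F$).

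The main obstacle is the combinatorial bookkeeping in the chain sum: one must choose the witnesses $x_i$ so that the $F$-function convolution property can be applied inductively and produce genuine spatial decay in $d(X,Y)$ through $\sum_{x\in X, y\in Y}F(d(x,y))$, rather than a merely volume-dependent estimate. The Jacobi identity plus interaction-picture manipulation at the start is conceptually clean once one recognizes that the ``drift'' contribution, being a commutator with a self-adjoint operator, is implemented by a norm-preserving unitary conjugation.
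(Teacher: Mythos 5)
Your argument is correct and is essentially the standard proof --- the one in \cite{nachtergaele:2019}, which is all the paper itself offers here: the text states the theorem without proof, remarking only that one first treats finite $\Lambda$ and then passes to the thermodynamic limit. Each step of your outline (differentiating $[\tau^{\Phi,\Lambda}_{t,s}(A),B]$, keeping only the $Z$ with $Z\cap X\neq\emptyset$, splitting via Jacobi into a norm-preserving drift plus a source term, iterating the resulting integral inequality, and resumming the chain sums with the $F$-function convolution property to get the $C_F^{k-1}/k!$ series) is the argument of record. Two small caveats. First, your derivation is valid as written only for finite $\Lambda$, where $H_\Lambda(t)$ is a bounded operator and the ODE manipulations are legitimate; the theorem's ``for all $\Lambda\subset\Gamma$'' includes infinite $\Lambda$, for which one must first establish convergence of the finite-volume dynamics (itself proved using the finite-volume bound) and then note that the estimate survives the limit --- this is exactly the two-step structure the paper's remark describes, and your writeup should acknowledge it. Second, your resummation naturally produces $C_F^{-1}\bigl(e^{2C_F\int_s^t\Vert\Phi(r)\Vert_F\,dr}-1\bigr)$; as you observe, whether $C_F$ sits inside the exponent is a normalization convention for $\Vert\Phi\Vert_F$ (or for the quantity $I_\Phi$ in \cite{nachtergaele:2019}), so this is bookkeeping rather than a gap, but it should be stated as a convention rather than waved at.
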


To prove this theorem one proves it first for finite $\Lambda$. The finite-volume result can then be used to prove the existence of
the thermodynamic of the dynamics, which then also satisfies the bound stated in the theorem. It is worth noting that this result
for the thermodynamic limit of the dynamics for quantum spin systems goes beyond the traditional approach \cite{bratteli:1997,simon:1993} 
if one considers interactions in which all many-body terms may be non-vanishing. This is important for proving the locality properties of the 
quasi-adiabatic evolution (also known as the spectral flow). See \cite{nachtergaele:2019} for the details.

For time-independent interactions that decay exponentially, that is $\Phi\in \cB_F$, with $F(r) = e^{-ar}F_0$ for another $F$-function $F_0$,
one easily recovers the original form of the Lieb-Robinson bounds. It suffices to estimate the sum in the right-hand side of \eq{LRB} as follows:
$$
\sum_{x\in X, y\in Y} F(d(x,y)) \leq e^{-a d(X,Y)} \sum_{x\in X, y\in Y} F_0(d(x,y)) \leq \Vert F_0\Vert_1 \min( |X|, |Y|) e^{-a d(X,Y)}.
$$
Using the time-independence of $\Vert \Phi\Vert_F$, one then obtains immediately the usual exponential form:
\begin{align}
\left\Vert [\tau^{\Phi}_{t}(A),B]\right\Vert &\leq C_F^{-1} 2\Vert A\Vert \Vert B\Vert \left(e^{2|t|\Vert\Phi\Vert_F} -1\right)  
\Vert F_0\Vert_1 \min( |X|, |Y|) e^{-a d(X,Y)}\notag\\
&\leq  2\Vert A\Vert \Vert B\Vert C_F^{-1}  \Vert F_0\Vert_1  \min( |X|, |Y|) 
\left(e^{2|t|\Vert\Phi\Vert_F} -1\right) e^{-a d(X,Y)}\notag\\
&\leq C \Vert A\Vert \Vert B\Vert\min( |X|, |Y|) e^{a (v_{LR}|t|-d(X,Y))},\label{expLRB}
\end{align}
with 
$$
v_{LR} = 2 a^{-1} \Vert \Phi\Vert_F , \quad C= 2C_F^{-1} \Vert F_0\Vert_1.
$$

I already mentioned the applications of Lieb-Robinson bounds to proving exponential clustering of gapped ground states and generalizations 
of the Lieb-Schultz-Mattis theorem to higher dimensions \cite{hastings:2006,nachtergaele:2006a,hastings:2004,nachtergaele:2007}. In a few years after 
Hastings' rediscovery of Lieb-Robinson bounds, numerous applications appeared at an accelerating rate, in particular in quantum information theory and its applications to quantum many-body systems. In many cases they allow properties valid for finite systems to be extended to the thermodynamic limit by proving that they hold `uniformly in the volume'. A basic example of this type is the following theorem, which provides a continuity estimate for the dependence of the 
infinite-system dynamics on the interaction $\Phi$.

\begin{thm}[\cite{nachtergaele:2019}]
Let $\Phi, \Psi \in \cB_F(I)$, $X\subset \Gamma$ finite, and $A\in \cA_X$. Then, for all $s\leq t\in I$,
$$
\Vert \tau_{t,s}^\Phi (A) - \tau_{t,s}^\Psi(A)\Vert \leq 2 \Vert F\Vert_1 C_F^{-1} |X| \Vert A \Vert 
|t-s| e^{2|t-s| \min (\Vert \Phi\Vert_F, \Vert \Psi\Vert_F)} \Vert \Phi - \Psi\Vert_F.
$$
\end{thm}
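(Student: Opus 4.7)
The plan is a Duhamel interpolation combined with Theorem~\ref{thm:LRB}. First work at finite volume $\Lambda$, where $H_\Phi(u)$, $H_\Psi(u)$ are bounded operators and the cocycles $U_{\Phi,\Lambda}$, $U_{\Psi,\Lambda}$ are ordinary unitary propagators. Define
$$G(u) := \tau^{\Phi,\Lambda}_{u,s}\!\bigl(\tau^{\Psi,\Lambda}_{t,u}(A)\bigr),\qquad s\leq u\leq t,$$
so that $G(s) = \tau^{\Psi,\Lambda}_{t,s}(A)$ and $G(t) = \tau^{\Phi,\Lambda}_{t,s}(A)$. Differentiating in $u$ and using $\partial_u U_\Phi(u,s) = -iH_\Phi(u)U_\Phi(u,s)$ together with $\partial_u U_\Psi(t,u) = iU_\Psi(t,u)H_\Psi(u)$, the contributions from the outer $U_\Phi$ factors and the inner $U_\Psi$ factors combine into
$$G'(u) = i\,\tau^{\Phi,\Lambda}_{u,s}\!\Bigl(\bigl[H_\Phi(u) - H_\Psi(u),\;\tau^{\Psi,\Lambda}_{t,u}(A)\bigr]\Bigr).$$
Integrating from $s$ to $t$, taking norms (using that $\tau^{\Phi,\Lambda}_{u,s}$ is an isometry), and expanding $H_\Phi(u) - H_\Psi(u) = \sum_{Z\subset\Lambda}(\Phi-\Psi)(Z,u)$ reduces the problem to bounding
$$\int_s^t\sum_{Z\subset\Lambda}\bigl\Vert\bigl[(\Phi-\Psi)(Z,u),\;\tau^{\Psi,\Lambda}_{t,u}(A)\bigr]\bigr\Vert\,du.$$

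To each commutator with $Z\cap X = \emptyset$ I apply Theorem~\ref{thm:LRB} for the $\Psi$-dynamics, obtaining an upper bound
$$2C_F^{-1}\Vert(\Phi-\Psi)(Z,u)\Vert\,\Vert A\Vert\,\bigl(e^{2\int_u^t\Vert\Psi(r)\Vert_F dr}-1\bigr)\sum_{x\in X,\,z\in Z}F(d(x,z));$$
the overlap terms $Z\cap X\neq\emptyset$ are handled by the trivial commutator bound $2\Vert(\Phi-\Psi)(Z,u)\Vert\,\Vert A\Vert$ and absorbed into the constants. Exchanging the order of summation,
$$\sum_Z\Vert(\Phi-\Psi)(Z,u)\Vert\!\!\sum_{x\in X,\,z\in Z}\!F(d(x,z)) \;=\; \sum_{x\in X}\sum_{z\in\Gamma}F(d(x,z))\sum_{Z\ni z}\Vert(\Phi-\Psi)(Z,u)\Vert,$$
and bounding the inner sum via the definition of $\Vert\cdot\Vert_F$ (applied with coinciding arguments, absorbing the $F(0)$ factor) together with $\sum_z F(d(x,z))\leq\Vert F\Vert_1$ yields the factor $|X|\,\Vert F\Vert_1\,\Vert(\Phi-\Psi)(u)\Vert_F$.

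The time integration is finished via $\int_s^t(e^{2(t-u)\Vert\Psi\Vert_F} - 1)\,du \leq (t-s)\,e^{2(t-s)\Vert\Psi\Vert_F}$ (which follows from $e^x - 1 \leq xe^x$ for $x\geq 0$), giving the stated exponential form. The finite-volume estimate is uniform in $\Lambda$ and passes to the thermodynamic limit since the finite-volume dynamics converge in norm to the infinite-volume ones under the $F$-function hypothesis. The $\min(\Vert\Phi\Vert_F,\Vert\Psi\Vert_F)$ in the exponent follows by symmetry: the analogous interpolation $\widetilde{G}(u) := \tau^{\Psi,\Lambda}_{u,s}\!\bigl(\tau^{\Phi,\Lambda}_{t,u}(A)\bigr)$ gives the same bound with $\Vert\Phi\Vert_F$ in the exponential, and combining the two estimates produces the minimum. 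The main obstacle is the bookkeeping needed to absorb the $Z\cap X\neq\emptyset$ contributions cleanly into the LRB estimate without spoiling the $|X|\,\Vert F\Vert_1$ prefactor, because the LRB-style bound does not automatically dominate the trivial commutator bound in the overlap regime; once that is handled, the rest is a routine computation in the $F$-function machinery.
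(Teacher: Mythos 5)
The paper itself offers no proof of this theorem --- it is imported verbatim from \cite{nachtergaele:2019} --- so there is no in-text argument to compare against; I can only say that your route (the cocycle interpolation $G(u)=\tau^{\Phi,\Lambda}_{u,s}(\tau^{\Psi,\Lambda}_{t,u}(A))$, the identity $G'(u)=i\,\tau^{\Phi,\Lambda}_{u,s}([H_\Phi(u)-H_\Psi(u),\tau^{\Psi,\Lambda}_{t,u}(A)])$, a term-by-term Lieb--Robinson estimate, interchange of sums, passage to the thermodynamic limit, and symmetrization to produce the $\min$ in the exponent) is the standard one and is the strategy of the cited reference. Your derivative identity is correct (writing $V(u)=U_\Psi(t,u)U_\Phi(u,s)$ and differentiating $V^*AV$ confirms it), the interchange of sums is legitimate, and the elementary inequality $e^x-1\le xe^x$ finishes the time integration as you say.

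The one genuine gap is the one you flag yourself, and it does not vanish ``into the constants.'' The terms with $Z\cap X\neq\emptyset$, handled by the trivial commutator bound and summed via $\sum_{Z\ni x}\Vert(\Phi-\Psi)(Z,u)\Vert\le F(0)\Vert(\Phi-\Psi)(u)\Vert_F$, contribute $2\Vert A\Vert\,|X|\,F(0)\int_s^t\Vert(\Phi-\Psi)(u)\Vert_F\,du$, carrying neither the factor $C_F^{-1}\Vert F\Vert_1$ nor the exponential. For this to be dominated by the stated right-hand side uniformly as $|t-s|\to 0$ one would need $F(0)\,C_F\le\Vert F\Vert_1$, whereas the convolution condition only yields $F(0)\le C_F$, and for the standard choice $F(r)=(1+r)^{-(\nu+\epsilon)}$ on $\Z^\nu$ the optimal convolution constant satisfies $C_F>\Vert F\Vert_1$ (already the sites between $x$ and $y$ force $\sum_z F(d(x,z))F(d(z,y))\gtrsim \Vert F\Vert_1 F(d(x,y))$, with more coming from the remaining $z$). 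The same normalization issue affects the factor $F(0)$ you propose to ``absorb'' in the disjoint part: that requires $F(0)\le 1$, which holds for every example in the paper but is not part of the definition of an $F$-function. The net effect is that your argument proves the estimate with prefactor of the form $F(0)+F(0)\Vert F\Vert_1 C_F^{-1}(\cdots)$ rather than $\Vert F\Vert_1 C_F^{-1}(\cdots)$ --- i.e.\ the theorem up to an $F$-dependent constant, which is all any application uses --- but it does not literally reproduce the displayed constant. To close this you must either impose a normalization on $F$ (e.g.\ $F(0)\le 1$ together with $C_F\le\Vert F\Vert_1$, which cannot always be arranged), keep the on-site contribution as an explicit additive term in the final bound, or treat the overlapping $Z$ inside the Lieb--Robinson iteration itself rather than peeling them off with the trivial bound.
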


The Lieb-Robinson bound of Theorem \ref{thm:LRB} applies to interactions that decay as a power law with sufficiently large exponent. This suffices for proving the existence of the thermodynamic limit of the dynamics and some other applications but in other situations one wants something better. For example, for exponentially decaying interactions the bound \eq{expLRB} shows the existence of a `light cone' that contains the essential support of time-evolved local observables.
To obtain an analogue of such a light cone for systems with power law interactions, a great deal of effort has been dedicated to improving the estimates in Lieb-Robinson bounds for such cases \cite{richerme:2014,matsuta:2017,else:2020,kuwahara:2020,tran:2021}. Another generalization with interesting applications concerns situations where $X$ or $Y$, or both, are infinite \cite{cha:2020,ogata:2021b}.

\section{Automorphic equivalence of gapped phases}

Gapped ground state phases are represented by open regions in some space of models (interactions). Much attention has been given in recent years to the topological classification of these phases. Instead of the usual notion of order parameters to characterize an ordered phase in statistical mechanics, one uses topological invariants (Chern numbers, various indices) to classify the topological structure of interest. Regardless of whether one is interested specifically in topological order or more generally in gapped ground state phases including those characterized by so called Landau order, one needs a mathematical description of which models and their ground state(s) belong to the same phase. 

Concretely, we would like to define an equivalence relation and identify phases with equivalence classes. Two seemingly different approaches have emerged in the literature. In the first, the emphasis is on parameter dependent model Hamiltonians. Two models are defined to represent the same gapped ground state phase if there is an interpolating curve of model Hamiltonians (interactions) along which the ground state gap satisfies a common positive lower bound. In the second, the focus is on the ground states themselves. Two states (which may or may not be assumed ground states of gapped Hamiltonians) are defined to belong to the same phase if there is a continuous family of sufficiently quasi-local automorphisms interpolating between them. Before investigating the relation between these two points of view, I give a precise formulation of each.

For concreteness, we will consider the class of interactions $\Phi\in \cB_F$, for an $F$-function $F$ on $(\Gamma, d)$, of stretched exponential form. Denote by $\delta^\Phi$ the corresponding generator of the Heisenberg dynamics on $\cA_\Gamma$. Furthermore, we restrict to the case of a finite number of mutually disjoint pure ground states $\cS^\Phi=\{\omega_1,\ldots, \omega_n\}$. We say that the model defined by $\Phi$ has {\em gapped ground states with gap $\gamma$}, $\gamma >0$, if for all $i=1,\ldots,n$, we have
\be
\omega_i (A^* \delta^\Phi(A)) \geq \gamma \omega(A^* A), \mbox{ for all } A\in\cA^{\rm loc}, \mbox{ with } \omega_i(A) =0.
\label{gapped_gs}\ee
This is equivalent to saying that the GNS Hamiltonian for the system in the ground state $\omega_i$ is a non-negative self-adjoint operator with a one-dimensional kernel and a spectral gap above zero of size $\geq \gamma$. Such ground states have been called {locally unique gapped ground states} by Tasaki 
 \cite{tasaki:2022}.

\subsection{Equivalence of interactions}

Suppose $\Phi_0$ and $\Phi_1$ are two interactions in the class $\cB_{a,\theta}$, with ground state sets $\cS^{\Phi_0}$ and  $\cS^{\Phi_1}$, respectively.

\begin{defn}\label{def:equivalent_interactions}\cite{chen:2010,nachtergaele:2019}
The interactions $\Phi_0$ and $\Phi_1$ belong to the same phase if there exists a differentiable curve of interactions $[0,1]\ni s \mapsto \Phi(s)$ such that the following hold:
\begin{enumerate}
\item $\Phi(0) = \Phi_0, \Phi(1) = \Phi_1$;
\item There exists a constant $\gamma'>0$, such that for all $s\in [0,1]$ $\Phi(s)$ has gapped ground states with gap $\gamma' >0$.
\item There exist $a'>0,\theta' \in (0,1]$, such that $\Phi (\cdot) \in \cB^1_{a',\theta'}([0,1])$, defined as the Banach space of interactions
for which, with $F(r)=e^{-a' r^{\theta'}}F_0(r)$,
$$
\sup_{x,y\in\Gamma} \frac{1}{F(d(x,y))} \sum_{{\rm finite} X: x,y\in X} \left( \Vert\Phi(X,s)\Vert + |X|  \Vert \Phi^\prime(X,s)\Vert\right)
$$
is bounded by a bounded measurable function of $s$.
\end{enumerate}
\end{defn} 

It is easy to show that `belonging to the same phase' defines an equivalence relation on a set of interactions.

\subsection{Equivalence of states}

Suppose $\cS_0$ and $\cS_1$ are two finite sets of pure states of $\cA_\Gamma$. 

\begin{defn}\label{def:equivalent_states}\cite{chen:2010,bachmann:2012}
The sets of states $\cS_0$ and $\cS_1$ are automorphically equivalent (in the stretched exponential locality class) if there exists a continuous
curve of interactions $[0,1]\ni s \mapsto \Psi(s)$ such that the following hold:
\begin{enumerate}
\item There exist $a'>0,\theta' \in (0,1]$, such that for all $s\in[0,1]$, $\Psi (s) \in \cB_{a',\theta'}$;
\item $[0,1]\ni s \mapsto \Psi(s)$ is continuous in the norm of $\cB_{a',\theta'}([0,1])$;
\item The family of automorphisms $\alpha_{s,0}$ generated by $\Psi(s)$ satisfies
$$
\cS_1 = \{ \omega\circ \alpha_{1,0}\mid \omega \in \cS_0\}.
$$
\end{enumerate}
\end{defn} 
 
It is also straightforward to show that `automorphic equivalence' defines an equivalence relation on sets of states.

\subsection{Equivalent equivalences}

We now show that the equivalence relations defined Definition \ref{def:equivalent_interactions} on sets of interactions on the one hand and 
Definition \ref{def:equivalent_states} on sets of states on the other hand are compatible in both directions.

\begin{thm}[From equivalence of states to equivalent interactions]
Let $\cS_0$ be a set of mutually disjoint pure ground states with a gap bounded below by $\gamma>0$
for the dynamics with generator $\delta_0$ defined by an interaction $\Phi_0\in\cB_{a,\theta}$, for some $a>0, \theta\in (0,1]$.
If a set of states $\cS_1$ is automorphic equivalent to $\cS_0$ in the stretched exponential locality class,
then there exists a differentiable curve of interactions of class $\cB^1_{a',\theta'}([0,1])$, for some $a'>0, \theta'\in (0,1]$, $\Phi(s), s\in [0,1]$, with $\Phi(0) =\Phi_0$, and such that  $\cS_1$ are gapped ground states with gap bounded below by $\gamma$ for the dynamics generated by $\Phi(1)$.
\end{thm}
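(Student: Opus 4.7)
The plan is to decouple the construction of the endpoint from the construction of the path. I will first produce a \emph{single} interaction $\Phi_1$ whose generator has $\cS_1$ as gapped ground states with gap $\gamma$, and then connect $\Phi_0$ to $\Phi_1$ by the affine path $\Phi(s):=(1-s)\Phi_0+s\Phi_1$, $s\in[0,1]$. This path is analytic in $s$, has $\Phi(0)=\Phi_0$ by construction, and $\Phi'(X,s)=\Phi_1(X)-\Phi_0(X)$; once I know $\Phi_0,\Phi_1\in\cB_{a'',\theta''}$, the extra $|X|$ factor in the $\cB^1_{a',\theta'}([0,1])$ norm is absorbed by using $|X|\leq 1+c(\diam X)^\nu$ and exchanging $a''$ for a slightly smaller $a'''$, so $\Phi(\cdot)\in\cB^1_{a''',\theta''}([0,1])$. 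Since the theorem as stated only requires the gap at $s=1$, the nontrivial work is concentrated entirely in producing $\Phi_1$.

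To construct $\Phi_1$, I transport $\Phi_0$ along the quasi-local automorphism $\alpha_{1,0}$ produced by $\Psi$. Consider formally the derivation on $\cA^{\rm loc}$
\[
\tilde\delta(A) \;=\; \alpha_{0,1}\bigl(\delta^{\Phi_0}(\alpha_{1,0}(A))\bigr) \;=\; \sum_X [\alpha_{0,1}(\Phi_0(X)),A].
\]
For $\omega\in\cS_0$ and $\tilde\omega:=\omega\circ\alpha_{1,0}\in\cS_1$, substituting $B:=\alpha_{1,0}(A)$ and using that $\alpha_{1,0}$ is a $^*$-automorphism gives $\tilde\omega(A)=\omega(B)$ together with
\[
\tilde\omega(A^*\tilde\delta(A)) \;=\; \omega(B^*\delta^{\Phi_0}(B)) \;\geq\; \gamma\,\omega(B^*B) \;=\; \gamma\,\tilde\omega(A^*A)
\]
whenever $\tilde\omega(A)=0$. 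Hence any interaction whose generator agrees with $\tilde\delta$ on local observables has $\cS_1$ as gapped ground states with gap $\gamma$.

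To realize $\tilde\delta$ as a genuine interaction I apply the standard reshaping construction. Setting $X(n):=\{y\in\Gamma:d(y,X)\leq n\}$ and $X(-1):=\emptyset$, define
\[
\Delta^X_n \;:=\; (\Pi_{X(n)}-\Pi_{X(n-1)})[\alpha_{0,1}(\Phi_0(X))], \qquad \Phi_1(Y) \;:=\; \sum_{(X,n)\,:\,X(n)=Y} \Delta^X_n.
\]
Then $\Phi_1(Y)\in\cA_Y$ is self-adjoint, the telescoping identity $\sum_{n\geq 0}\Delta^X_n=\alpha_{0,1}(\Phi_0(X))$ holds in norm, and the generator of $\Phi_1$ therefore coincides with $\tilde\delta$ on $\cA^{\rm loc}$.

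The main obstacle is to verify that $\Phi_1\in\cB_{a'',\theta''}$. Theorem \ref{thm:LRB} applied to the dynamics generated by $\Psi(s)\in\cB_{a',\theta'}$ supplies a stretched-exponential Lieb-Robinson bound on commutators of $\alpha_{0,1}(\Phi_0(X))$ with observables supported in $\Gamma\setminus X(n)$, and the $\Pi_{X(n)}$-approximation error inherits the same control:
\[
\|\Delta^X_n\| \;\leq\; C\,|X|\,\|\Phi_0(X)\|\,\exp(-c\,n^{\theta'}),
\]
with constants depending on the Lieb-Robinson velocity and $\int_0^1\|\Psi(r)\|_F\,dr$. Summing over pairs $(X,n)$ with $X(n)=Y$, folding in the stretched-exponential decay of $\Phi_0\in\cB_{a,\theta}$, and using the polynomial ball growth $|b_x(r)|\leq 1+cr^\nu$ yields $\Phi_1\in\cB_{a'',\theta''}$ with $\theta''=\min(\theta,\theta')$ (up to a small further loss) and a suitable $a''>0$. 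The delicate point is that the composite weight arising from the combination of two stretched-exponential $F$-functions must itself be an $F$-function of stretched-exponential type, which requires sub-additivity of the effective weight $g(r)=r^{\theta''}$; this is the content of the quasi-local transformation calculus developed in \cite{nachtergaele:2019,moon:2020,nachtergaele:2021}, which I would invoke directly rather than re-derive.
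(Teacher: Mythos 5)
Your endpoint construction is sound and is essentially the paper's own argument specialized to $s=1$: conjugating the generator, $\tilde\delta=\alpha_{0,1}\circ\delta^{\Phi_0}\circ\alpha_{1,0}$, the substitution $B=\alpha_{1,0}(A)$ giving $\tilde\omega(A^*\tilde\delta(A))\geq\gamma\,\tilde\omega(A^*A)$, and the telescoping localization with conditional expectations controlled by the Lieb--Robinson bound for $\Psi$ is exactly the mechanism the paper delegates to \cite[Section VI.E.2]{nachtergaele:2019}. Where you genuinely diverge is the path. The paper conjugates along the \emph{entire} curve, setting $\delta_s=\alpha_s^{-1}\circ\delta_0\circ\alpha_s$ and localizing each $\delta_s$ to an interaction $\Phi(s)$; the same one-line computation then shows the gap is \emph{constant in $s$}, so the curve witnesses condition (ii) of Definition \ref{def:equivalent_interactions}. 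Your affine path $\Phi(s)=(1-s)\Phi_0+s\Phi_1$ gives no control whatsoever on the spectrum of the interpolating generators: a convex combination of two gapped interactions can close the gap (indeed this is how phase transitions are usually produced), so for $0<s<1$ you have neither a distinguished set of ground states nor a gap.

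You are right that the theorem as literally worded only demands the gap at $s=1$, and on that reading your proof is complete modulo the quoted quasi-locality calculus. But the theorem's title and its role in the section (``we now show that the equivalence relations \dots are compatible in both directions'') make clear that the intended conclusion is that $\Phi_0$ and $\Phi(1)$ \emph{belong to the same phase} in the sense of Definition \ref{def:equivalent_interactions}, which requires a uniform gap along the whole curve. Your shortcut forfeits precisely that, and it cannot be repaired within the affine-path framework; you must take $\Phi(s)$ to be the localization of $\alpha_s^{-1}(\Phi_0(\cdot))$ for every $s$, at which point the decoupling that motivated your plan disappears and you recover the paper's proof. A secondary, smaller point: the regularity in $s$ of the paper's $\Phi(s)$ (needed for the $\cB^1_{a',\theta'}([0,1])$ membership, i.e.\ the bound involving $|X|\,\Vert\Phi'(X,s)\Vert$) comes for free in your affine path but requires the quantitative $s$-differentiability of $s\mapsto\alpha_s^{-1}(\Phi_0(X))$ in the paper's route; that is part of what the cited construction in \cite{nachtergaele:2019} supplies.
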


\begin{proof}
Since each $\omega_0\in\cS_0$ is a gapped ground state with gap $\geq \gamma$ for the dynamics generated by $\delta_0$ defined in terms of an interaction
$\Phi(0)$, we have
$$
\delta_0(A) = \sum_{X\subset\Gamma, X\cap \supp(A)\neq \emptyset} [\Phi_0(X), A], \mbox{ for all } A\in\cA^{\rm loc},
$$
and
$$
\omega_i (A^* \delta_0(A)) \geq \gamma \omega_i(A^* A), \mbox{ for all } A\in\cA^{\rm loc}, \mbox{ with } \omega_i(A) =0.
$$
Let  $\alpha_s, s\in [0,1]$, be the curve of automorphisms implied by the automorphic equivalence of $\cS_1$ with $\cS_0$, and define
$$
\delta_s = \alpha_s^{-1} \circ \delta_0 \circ \alpha_s, s\in [0,1].
$$
It is straightforward to check that the action of the generator $\delta_s$ on local observables $A$ can be given as
$$
\delta_s(A) =  \sum_{X\subset\Gamma, X\cap \supp(A)\neq \emptyset} [ \alpha_s^{-1} (\Phi_0(X)), A].
$$
The map $X\mapsto \tilde\Phi(X,s) :=\alpha_s^{-1} (\Phi_0(X))$ is not an interaction in the usual sense because  $\tilde\Phi(X,s)$ are generally not local.
It is shown in \cite[Section VI.E.2]{nachtergaele:2019} how to construct a proper interaction of the stretched exponential class, $\Phi(s)$, that generates the 
correct infinite system dynamics, meaning  $\tau^{\Phi(s)}_t =  \alpha_s^{-1} \circ \tau^{\Phi_0}_t \circ \alpha_s, s\in [0,1], t\in\Rl$. Of course, $\Phi(s)$ 
also leads to the same generator $\delta_s$.

Then, using $\cA_{\rm loc} \subset \cA_f \subset \dom \delta_s$, and $\alpha_s^{\pm 1} (\cA_f)\subset \cA_f$, for suitable $f$, we can
verify that the gap remains bounded below by $\gamma$ along the curve (in fact, the gaps for each pure state are constant as a function of $s$).
To see this, take $A\in\cA^{\rm loc}$, such that $\omega_s(A) =0$. Then, by the assumptions, $\alpha_s(A)\in \dom \delta_0$ and  $
\omega_0(\alpha_s(A)) =0$, we have
\begin{align}
\omega_s(A^* \delta_s(A)) &= \omega_0(\alpha_s(A^*) \alpha_s(\delta_s(A))) \notag\\
&=\omega_0(\alpha_s(A^*) \delta_0(\alpha_s(A))) \notag\\
&\geq \gamma \omega_0(\alpha_s(A^*) \alpha_s(A)) \notag\\
&= \gamma \omega_s(A^*A).\notag
\end{align}
This proves that $\omega_1$ is a ground state for $\delta_1$ and with the same gap as $\omega_0$ for $\delta_0$, and
$\omega_s, s\in [0,1]$ are gapped ground states of the interpolating family $\Phi^{(s)}$. Since $\omega_0\in \cS_0$ was arbitrary, this conclude the proof.
\end{proof}

\begin{thm}[From equivalent interactions to automorphic equivalence]
Suppose $s\mapsto \Phi(s)$ is a differentiable curve of interactions of class $\cB^1_{a,\theta}([0,1])$, such that there exists $\gamma>0$ 
and sets of mutually disjoint pure gapped ground states $\cS_s$, $s\in [0,1]$, with gap bounded below by $\gamma$. 
Then, there exists a strongly continuous curve of automorphisms $\alpha_{s}$ of class $\cB_{a',\theta'}([0,1])$, such that 
$$
\cS_s = \{ \omega\circ \alpha_{s,0}\mid \omega \in \cS_0\}.
$$
\end{thm}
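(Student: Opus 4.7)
The plan is to construct $\alpha_{s,0}$ via the \emph{quasi-adiabatic evolution} (also called the spectral flow), following the strategy of Hastings--Wen, Bachmann--Michalakis--Nachtergaele--Sims, and the refined treatment in \cite{nachtergaele:2019}. The generator of $\alpha_{s,0}$ will be constructed out of $\Phi'(s)$ using an integral against a suitably chosen weight function, and locality of the resulting interaction will be obtained by combining sub-exponential decay of the weight with the Lieb-Robinson bound of Theorem \ref{thm:LRB}.

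First I would fix an odd real function $W_\gamma \in L^1(\bR)$ whose Fourier transform satisfies $\widetilde{W}_\gamma(E) = -1/(i E)$ for all $|E|\geq \gamma$ and which decays sub-exponentially, concretely $|W_\gamma(t)| \leq c \exp(-a'' |t|/\log^2(2+|t|))$ for some $a''>0$ (the usual Hastings weight referenced in the remark around \eqref{special_weight}). For each finite $\Lambda \subset \Gamma$ and each $s \in [0,1]$, I would set
$$
D_\Lambda(s) \;=\; \sum_{X \subset \Lambda} \Phi'(X,s), \qquad K_\Lambda(s) \;=\; \int_{\bR} W_\gamma(t)\, \tau^{\Phi(s),\Lambda}_{t}\!\bigl(D_\Lambda(s)\bigr)\, dt,
$$
and let $V_\Lambda(s,0)$ be the unitary cocycle generated by $-i K_\Lambda(s)$. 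A short finite-volume computation (differentiate $H_\Lambda(s)$ in $s$, split into spectral pieces above and below the gap, use the defining property of $\widetilde{W}_\gamma$) shows that $V_\Lambda(s,0)$ intertwines the spectral projections of $H_\Lambda(s)$ associated with the low-lying sector, i.e.\ $V_\Lambda(s,0)\, P_\Lambda(0)\, V_\Lambda(s,0)^* = P_\Lambda(s)$.

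Next I would rewrite $K_\Lambda(s)$ as a sum of quasi-local terms $\Psi(Z,s)$. The standard way is: for each $X$ contributing to $D_\Lambda(s)$, write $\tau^{\Phi(s),\Lambda}_t(\Phi'(X,s))$ as a telescoping sum of its conditional expectations onto inflated supports $X_n := \{y : d(y,X) \leq n\}$ using the tracial conditional expectations $\Pi_\Lambda$ introduced earlier; the Lieb-Robinson bound \eqref{LRB} controls each telescope term, and the sub-exponential decay of $W_\gamma$ together with the Lieb-Robinson light-cone yields
$$
\Vert \Psi(Z,s)\Vert \;\leq\; \text{const} \cdot e^{-a' \diam(Z)^{\theta'}}
$$
for some $a' > 0$ and $\theta' \in (0,1]$ strictly worse than $a,\theta$. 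Summing over $x,y \in Z$ and using the hypothesis $\Phi(\cdot) \in \cB^1_{a,\theta}([0,1])$ (which controls $\Vert \Phi'(X,s)\Vert$ with a factor $|X|$), I would conclude $\Psi(s) \in \cB_{a',\theta'}$ with $\Vert \Psi(s)\Vert_F$ locally bounded and $s \mapsto \Psi(s)$ continuous in this norm. The thermodynamic limit $\alpha_{s,0} := \lim_{\Lambda \uparrow \Gamma} \mathrm{Ad}[V_\Lambda(s,0)]$ then exists as a strongly continuous cocycle of automorphisms of $\cA_\Gamma$ by the machinery of \cite{nachtergaele:2019}, giving the required curve in $\cB_{a',\theta'}([0,1])$.

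Finally, to identify the ground state sets, I would take $\omega \in \cS_0$ and show that $\omega \circ \alpha_{s,0}$ belongs to $\cS_s$. The gap inequality transports verbatim by the computation already used in the previous theorem (with $\alpha_s$ replaced by $\alpha_{s,0}$ and $\delta_0,\delta_s$ by $\delta^{\Phi(0)},\delta^{\Phi(s)}$), since $\alpha_{s,0}^{-1} \circ \tau^{\Phi(s)}_t \circ \alpha_{s,0} = \tau^{\Phi(0)}_t$ on $\cA_\Gamma$; this last intertwining follows in finite volume from the intertwining of spectral projections above and an explicit check that $V_\Lambda(s,0)$ transports the full GNS Hamiltonian restricted to the ground sector, then passes to the limit through the $F$-norm estimates. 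Since $|\cS_s|$ is assumed constant along the curve and the pure states are mutually disjoint, the induced map $\cS_0 \to \cS_s, \omega \mapsto \omega \circ \alpha_{s,0}$ is necessarily a bijection, giving the required equality. The main obstacle is the quasi-locality step: one has to turn the integral defining $K_\Lambda(s)$ into an honest interaction satisfying the $F$-function bound needed for $\cB_{a',\theta'}$, which is where the trade-off between the original decay $(a,\theta)$, the Lieb-Robinson velocity, and the logarithmic correction in $W_\gamma$ forces the slightly weaker parameters $(a',\theta')$; everything else is a bookkeeping exercise once that estimate is in place.
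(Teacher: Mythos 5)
Your overall strategy is the same as the paper's: both construct $\alpha_{s,0}$ as Hastings' quasi-adiabatic evolution (spectral flow), with the generator obtained by integrating $\tau^{\Phi(s)}_t(\Phi'(X,s))$ against a weight with sub-exponential decay as in \eq{special_weight}, and both obtain the stretched-exponential interaction class $\cB_{a',\theta'}$ by trading the Lieb--Robinson light cone against the decay of the weight via conditional expectations. The paper's own proof is only a sketch that defers all of these estimates to \cite{nachtergaele:2019}, so on the construction and quasi-locality side your plan is, if anything, more detailed than the text it is being compared to.

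There is, however, one genuinely false step in your final paragraph: the claimed identity $\alpha_{s,0}^{-1}\circ\tau^{\Phi(s)}_t\circ\alpha_{s,0}=\tau^{\Phi(0)}_t$ does not hold. The spectral flow intertwines the low-energy spectral projections, $V_\Lambda(s,0)P_\Lambda(0)V_\Lambda(s,0)^*=P_\Lambda(s)$ (which you state correctly earlier), but it does not conjugate the full Hamiltonian $H_\Lambda(0)$ into $H_\Lambda(s)$ --- the excited spectra of the two Hamiltonians are in general completely different, so no automorphism can intertwine the two Heisenberg dynamics. This is precisely the structural difference from the preceding theorem, where $\delta_s$ was \emph{defined} as $\alpha_s^{-1}\circ\delta_0\circ\alpha_s$ and the intertwining held by construction; here $\Phi(s)$ is given and $\alpha_{s,0}$ is built from it, so the verbatim transport of the gap inequality is not available (nor is it needed: the gap for $\cS_s$ is a hypothesis, not something to be proved). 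The correct conclusion of the argument is to pass from the finite-volume projection intertwining to the statement $\omega\circ\alpha_{s,0}\in\cS_s$, i.e.\ that $\omega\circ\alpha_{s,0}$ satisfies the ground state condition \eq{gapped_gs} for $\delta^{\Phi(s)}$; this limit step is delicate (it is the content of the corresponding results in \cite{bachmann:2012,moon:2020}, and is also where one must confront the fact that the hypothesis is a gap for the infinite-volume GNS Hamiltonians rather than a uniform finite-volume gap). A minor additional point: the theorem does not explicitly assume $|\cS_s|$ is constant in $s$, so your bijectivity argument should be phrased so that constancy of the cardinality is a consequence of the intertwining rather than an input.
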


\begin{proof}
For any differentiable curve of interactions one can define the Hastings generator of what is often referred to as the quasi-adiabatic evolution\cite{hastings:2004,hastings:2005,bachmann:2012,nachtergaele:2019}. It is defined by transforming the following
`non-local interaction' into a standard interaction:
$$
\tilde\Psi(X,s)=
\int_{-\infty}^\infty w_\xi(t) \int_0^t \tau_u^{(s)} \left(\frac{d}{ds} \Phi(X,s)\right)du\, dt.
$$
Here, there are in principle many possible choices for the function $w_\xi(t)$. A specific choice is the $L^1$-normalized positive function 
discussed in detail in \cite[Section VI.B]{nachtergaele:2019}. It satisfies
\be
w_\xi(t) \leq c \xi |t| e^{-\eta\frac{\xi |t|}{(\ln (\xi |t|))^2}}
\label{special_weight}\ee
for some numerical constants $c >0$ and $\eta > 2/7$, and $\xi$ is a parameter that needs to satisfy $\xi\in(0,\gamma)$. A larger value of 
$\xi$ generally leads to better locality properties for the automorphisms $\alpha_s$ generated by the $\Psi(s)$. See \cite[Section VI]{nachtergaele:2019} for an in depth discussion. The upshot is that the curve $\alpha_s$ is generated by a `time-dependent' interaction $\Psi(s) $ of a stretched exponential decay class:
$$
\frac{d}{ds} \alpha_s (A) = i\alpha_s(\delta^{\Psi(s)}(A)), A \in \cA^{\rm loc}.
$$
The crucial property of Hastings' quasi-adiabatic evolution $\alpha_s$ is then that for any $\omega_0\in\cS_0$, $\omega_s := \omega_0 \circ \alpha_s \in \cS_s$, which proves the automorphic equivalence of the sets of states $\cS_0$ and $\cS_1$.
\end{proof}

\section{Remarks}

There are a number of topics closely related to the notion of gapped ground state phase discussed in this paper that I have not discussed. The first obvious question one would raise in connection to the gapped phases is their stability. Are the interactions that lead to gapped ground states part of an open set of interactions for which the gap persists? This type of stability property has been the goal of much research over the past several decades. Some recent approaches include \cite{yarotsky:2006,de-roeck:2019,frohlich:2020}. A very successful approach that uses Lieb-Robinson bounds at its core is the Bravyi-Hastings-Michalakis strategy of \cite{bravyi:2010,bravyi:2011,michalakis:2013} further developed in \cite{nachtergaele:2021,nachtergaele:2021c}.

For concreteness and brevity, I focused the discussion on quantum spin systems, but note that Lieb-Robinson bounds and automorphic equivalence can be proved and applied in the same way for lattice fermion systems \cite{hastings:2006,bru:2017,nachtergaele:2018,moon:2018,bourne:2021,ogata:2021c}.

\section*{Acknowledgements}

As so many others, I continue to draw inspiration from Elliott Lieb's pathbreaking work on quantum lattice systems. This acknowledgement is too short to do justice to the profound  influence Elliott has had on my way of doing mathematical physics. Of the many collaborators with whom I had the pleasure of working on the topics discussed in this paper, I want to acknowledge two in particular: Robert Sims (U Arizona) and Amanda Young (TU Munich \& MCQST). 

\section*{Funding}
This paper is based upon work supported by the National Science Foundation under grant DMS--2108390.
 

\end{document}